\def\y{{\bf y}}
\def\x{{\bf x}}
\def\x{{\mathbf x}}
\def\x{{\bf x}}
\def\y{{\bf y}}
\def\z{{\bf z}}
\def\erf{\mbox{erf}}
\def\erfinv{\mbox{erfinv}}
\def\inter{{C_{int}}}
\def\exter{{C_{ext}}}
\def\com{{C_{com}}}
\def\psiint{{\Psi_{int}}}
\def\psiext{{\Psi_{ext}}}
\def\psicom{{\Psi_{com}}}
\def\psinet{{\Psi_{net}}}
\def\be{\begin{equation}}
\def\ee{\end{equation}}
\def\ba{\left[\begin{array}}
\def\ea{\end{array}\right]}
\def\x{{\bf x}}
\def\y{{\bf y}}
\def\z{{\bf z}}
\def\1{{\bf 1}}
\def\0{{\bf 0}}
\def\erfinv{\mbox{erfinv}}
\def\htheta{\hat{\theta}}
\newtheorem{theorem}{Theorem}
\begin{document}

\begin{singlespace}

\title {Linear under-determined systems with sparse solutions: Redirecting a challenge? %A tight variant of Gordon's escape through a mesh theorem
\footnote{ This work was supported in part by NSF grant \#CCF-1217857.}
}
\author{
\textsc{Mihailo Stojnic}
\\
\\
{School of Industrial Engineering}\\
{Purdue University, West Lafayette, IN 47907} \\
{e-mail: {\tt mstojnic@purdue.edu}} }
\date{}
\maketitle

\centerline{{\bf Abstract}} \vspace*{0.1in}

Seminal works \cite{CRT,DonohoUnsigned,DonohoPol} generated a massive interest in studying linear under-determined systems with sparse solutions. In this paper we give a short mathematical overview of what was accomplished in last $10$ years in a particular direction of such a studying. We then discuss what we consider were the main challenges in last $10$ years and give our own view as so what are the main challenges that lie ahead. Through the presentation we arrive to a point where the following natural rhetoric question arises: is it a time to redirect the main challenges? While we can not provide the answer to such a question we hope that our small discussion will stimulate further considerations in this direction.

\vspace*{0.25in} \noindent {\bf Index Terms: Linear systems of equations; sparse solutions;
$\ell_1$-optimization}.

\end{singlespace}

%%%%%%%%%%%%%%%%%%%%%%%%%%%%%%%%%%%%%%%%%%%%%%%%%%%%%%%%%%%%%%%%%
\section{Introduction}
\label{sec:back}
%%%%%%%%%%%%%%%%%%%%%%%%%%%%%%%%%%%%%%%%%%%%%%%%%%%%%%%%%%%%%%%%%

In this paper we will be interested in studying under-determined systems of linear equations with sparse solutions. We start by looking at the mathematical formulation of such a problem that attracted enormous attention in recent years. The problem is essentially the following. Let $\tilde{\x}$ be an $n$-dimensional vector from $R^n$. Moreover let $\tilde{\x}$ be $k$-sparse (under $k$-sparse we assume vectors that have at most $k$ components that are not equal to zero; clearly $k\leq n$). Let $A$ be an $m \times n$ matrix from $R^{m\times n}$. We will call $A$ the system or the measurement matrix and throughout the rest of the paper assume that $A$ is a full rank matrix (on occasions when $A$ happens to be random we will assume that $A$ is of full rank with overwhelming probability, where under overwhelming probability we consider a probability that is not more than a number exponentially decaying in $n$ away from $1$).
Now, the question of interest is: given $A$ and $A\tilde{\x}$ can one find $\x$ such that
\begin{equation}
A\x=A\tilde{\x}.\label{eq:system}
\end{equation}
 Fairly often one refers to $A\tilde{\x}$ in (\ref{eq:system}) as a known vector $\y$ from $R^m$. In other words, if one rewrite the system given in \ref{eq:system} in a more natural way
\begin{equation}
A\x=\y.\label{eq:systemy}
\end{equation}
$y$ is essentially implied to be constructed as the product of matrix $A$ and a $k$-sparse vector $\tilde{\x}$. We will often in the rest of the paper use the expression ``solve the system give in (\ref{eq:system})". By that we will mean that if $\hat{\x}$ is the solution we found by solving (\ref{eq:systemy}) using any available methodology then $\hat{\x}=\tilde{\x}$.

The problem stated in (\ref{eq:system}) (or in (\ref{eq:systemy}) is very simple. In fact as we mentioned above it is nothing but a system of linear equations which are additionally assumed to have sparse solutions. As is usually the case with linear systems, a critical piece of information that enables one to solve the problem is the relation between $k$, $m$, and $n$. Clearly, if $m\geq n$ the system is either over-determined or just determined and solving (\ref{eq:system}) could a bit easier. On the other hand if $m<n$ the system is under-determined and in general may not have a unique solution. However, if $k<m$ one may still be able to figure out what $\tilde{\x}$ is. Clearly, if one knows a priori the value of $k$ one can then search over all subsystems obtained by extracting $k$ columns from matrix $A$. Of course, such an approach would probably solve the problem but it is very complex when $n$ and $k$ are large. To be a bit more specific (as well as to make all our points in the paper clearer) we will in the rest of the paper assume the so-called linear regime, i.e. the regime where all $k$, $m$, and $n$ are large but proportional to each other. To be even more specific we will assume that constant of proportionality are $\beta$ and $\alpha$, i.e. we will assume that $\beta=\frac{k}{n}$ and $\alpha=\frac{m}{n}$. Under such an assumption the complexity of the above mentioned strategy of extracting all subsystems with $k$ columns would be exponential. Instead of such a simple strategy one can employ a host of more sophisticated approaches. Since this paper has its own objective we will not present all known approaches. Instead will try to shorten our presentation in that regard and focus only on what we consider are the most popular/well known/successful ones. Moreover, since studying any of such approaches is nowadays pretty much a theory on its own we will most often just give the core information and leave more sophisticated discussion for overview type of the papers.

We start by emphasizing that one can generally distinguish two classes of possible algorithms that can be
developed for solving (\ref{eq:system}). The first class of algorithms assumes freedom in designing matrix $A$. Such a class is already a bit different from what can be employed for solving (\ref{eq:system}) or (\ref{eq:systemy}). Namely, as we mentioned right before (\ref{eq:system}) and (\ref{eq:systemy}), our original setup assumes that we are given a matrix $A$. Still to maintain a completeness of the exposition we briefly mention this line of work and do so especially since what can be achieved within such an approach seems to be substantially better than what can be achieved within our setup.

So, if one has the freedom to design matrix $A$ then the results from \cite{FHicassp,Tarokh,MaVe05} demonstrated that the techniques from
coding theory (based on the coding/decoding of Reed-Solomon codes)
can be employed to determine \emph{any} $k$-sparse $\x$ in
(\ref{eq:system}) for any $0<\alpha\leq 1$ and any
$\beta\leq\frac{\alpha}{2}$ in polynomial time (it is relatively easy to show that under the unique recoverability assumption
$\beta$ can not be greater than $\frac{\alpha}{2}$). Therefore, as long as one is concerned with the unique recovery of
$k$-sparse $\x$ in (\ref{eq:system}) in polynomial time the results from \cite{FHicassp,Tarokh,MaVe05} are
optimal. The complexity of algorithms from
\cite{FHicassp,Tarokh,MaVe05} is roughly $O(n^3)$. In a similar fashion one can, instead of using coding/decoding techniques associated with Reed/Solomon codes,
design the matrix and the corresponding recovery algorithm based on the techniques related to the coding/decoding of
Expander codes (see e.g.
\cite{XHexpander,JXHC08,InRu08} and references therein). In that case recovering $\x$ in
(\ref{eq:system}) is significantly faster for large dimensions $n$. Namely, the complexity of the techniques from e.g. \cite{XHexpander,JXHC08,InRu08}
(or their slight modifications) is usually
$O(n)$ which is clearly for large $n$ significantly smaller than $O(n^3)$. However,
the techniques based on coding/decoding of Expander codes usually do not allow for $\beta$ to be as large as
$\frac{\alpha}{2}$.

The main interest of this paper however will be the algorithms from the second class. Within the second class are the algorithms that should be designed without having the choice of $A$ (instead, as mentioned right before (\ref{eq:system}) and (\ref{eq:systemy}) matrix $A$ is rather given to us). Designing the algorithms from the second class is substantially harder compared to the design of the algorithms from the first class. The main reason for hardness is that when there is no choice in $A$ the recovery
problem (\ref{eq:system}) becomes NP-hard. The following three algorithms (and their different
variations) we currently view as solid heuristics for solving (\ref{eq:system}):
\begin{enumerate}
\item \underline{\emph{Orthogonal matching pursuit - OMP}}
\item \underline{\emph{Basis pursuit -
$\ell_1$-optimization.}}
\item \underline{\emph{Approximate message passing - AMP}}
\end{enumerate}
We do however mention that in the third technique, which is based on belief propagation type of algorithms, is emerging as a strong alternative in recent years. While it does not have as strong historical a background as the other two (at least when it comes to solving (\ref{eq:system})) its great performance features as well as its a relatively easy implementation make it particularly attractive.
Under certain probabilistic assumptions on the elements of $A$ it can be shown (see e.g. \cite{JATGomp,JAT,NeVe07})
that if $m=O(k\log(n))$
OMP (or slightly modified OMP) can recover $\x$ in (\ref{eq:system})
with complexity of recovery $O(n^2)$. On the other hand a stage-wise
OMP from \cite{DTDSomp} recovers $\x$ in (\ref{eq:system}) with
complexity of recovery $O(n \log n)$. Somewhere in between OMP and BP are recent improvements CoSAMP (see e.g. \cite{NT08}) and Subspace pursuit (see e.g. \cite{DaiMil08}), which guarantee (assuming the linear regime) that the $k$-sparse $\x$ in (\ref{eq:system}) can be recovered in polynomial time with $m=O(k)$ equations.

%%%%%%%%%%%%%%%%%%%%%%%%%%%%%%%%%%%%%%%%%%%%%%%%%%%%%%%%%%%%%%%%%
\section{$\ell_1$-optimization}
\label{sec:l1}
%%%%%%%%%%%%%%%%%%%%%%%%%%%%%%%%%%%%%%%%%%%%%%%%%%%%%%%%%%%%%%%%%

We will now further narrow down our interest to only the performance of $\ell_1$-optimization. (Variations of the standard $\ell_1$-optimization from e.g.
\cite{CWBreweighted,SChretien08,SaZh08}) as well as those from \cite{SCY08,FL08,GN03,GN04,GN07,DG08} related to $\ell_q$-optimization, $0<q<1$
are possible as well.) Basic $\ell_1$-optimization algorithm offers an $\x$ in
(\ref{eq:system}) as the solution of the following $\ell_1$-norm minimization problem
\begin{eqnarray}
\mbox{min} & & \|\x\|_{1}\nonumber \\
\mbox{subject to} & & A\x=\y. \label{eq:l1}
\end{eqnarray}
Due to its popularity the literature on the use of the above algorithm is rapidly growing. We below restrict our attention to two, in our mind, the most influential works that relate to (\ref{eq:l1}).

The first one is \cite{CRT} where the authors were able to show that if
$\alpha$ and $n$ are given, $A$ is given and satisfies the restricted isometry property (RIP) (more on this property the interested reader can find in e.g. \cite{Crip,CRT,Bar,Ver,ALPTJ09}), then
any unknown vector $\x$ with no more than $k=\beta n$ (where $\beta$
is a constant dependent on $\alpha$ and explicitly
calculated in \cite{CRT}) non-zero elements can be recovered by
solving (\ref{eq:l1}). As expected, this assumes that $\y$ was in
fact generated by that $\x$ and given to us. The case when the
available $\y$'s are noisy versions of real $\y$'s is also of
interest \cite{CT,CRT,HN,W}. Although that case is not of primary interest in the present paper
it is worth mentioning that the recent
popularity of $\ell_1$-optimization in the field of compressed sensing (where problem (\ref{eq:system}) is one of key importance) is
significantly due to its robustness with respect to noisy
$\y$'s. (Of course, the main reason for its popularity is its
ability to solve (\ref{eq:system}) for a very wide range of matrices
$A$; more on this universality from a statistical point of view the interested reader can find in \cite{DonTan09Univ}.)

However, the RIP is only a \emph{sufficient}
condition for $\ell_1$-optimization to produce the $k$-sparse solution of
(\ref{eq:system}). Instead of characterizing $A$ through the RIP
condition, several alternative route have been introduced in recent years. Among the most successful ones are those from e.g.  \cite{DonohoUnsigned,DonohoPol,DonohoSigned,DT,StojnicCSetam09,StojnicUpper10} and we will revisit them below. However, before revisiting these approaches we should mention that it was fairly early observed that if matrix $A$ and vector $\tilde{\x}$ are deterministic (and hence can always be chosen so to make the solution of (\ref{eq:l1}) be as far away as possible from $\tilde{\x}$) then it is highly unlikely that (\ref{eq:l1}) would be of much help in providing a provably fast (say, polynomial) way for ``guaranteed" solving of (\ref{eq:system}). Having this in mind the shift to statistical $A$ and/or $\tilde{\x}$ happened fairly quickly. The idea for such a shift can be summarized in the following way: if it is not possible to recover $\tilde{\x}$ in (\ref{eq:system}) by solving (\ref{eq:l1}) for \emph{all} $A$ and $\tilde{\x}$ then maybe it can still be possible for an overwhelming majority of them. A way to characterize such an overwhelming majority is then to introduce randomness on $A$ and/or $\tilde{\x}$. For example, if $A$ is a random matrix one may be able to say that if a concrete $A$ (i.e., its elements) in (\ref{eq:system}) is drawn from a probabilistic distribution then maybe for such an $A$ the solution of (\ref{eq:l1}) is often $\tilde{\x}$. This is somewhat standard way of attacking NP-hardness (there are of course more sophisticated ways, but in this paper we will look just at this basic premise).

%%%%%%%%%%%%%%%%%%%%%%%%%%%%%%%%%%%%%%%%%%%%%%%%%%%%%%%%%%%%%%%%%
\subsection{Geometric approach to $\ell_1$-optimization}
\label{sec:geoml1}
%%%%%%%%%%%%%%%%%%%%%%%%%%%%%%%%%%%%%%%%%%%%%%%%%%%%%%%%%%%%%%%%%

In \cite{DonohoUnsigned,DonohoPol} Donoho revisited the $\ell_1$-optimization technique from (\ref{eq:l1}) and looked at its geometric properties/potential. Namely,
in \cite{DonohoUnsigned,DonohoPol} Donoho considered the polytope obtained by
projecting the regular $n$-dimensional cross-polytope $C_p^n$ by $A$. He then established that
the solution of (\ref{eq:l1}) will be the $k$-sparse solution of
(\ref{eq:system}) (i.e., it will be $\tilde{\x}$) if and only if
$AC_p^n$ is centrally $k$-neighborly
(for the definitions of neighborliness, details of Donoho's approach, and related results the interested reader can consult now already classic references \cite{DonohoUnsigned,DonohoPol,DonohoSigned,DT}). In a nutshell, relying on a long line of geometric results
from \cite{PMM,AS,BorockyHenk,Ruben,VS}, in
\cite{DonohoPol} Donoho showed that if $A$ is a random $m\times n$
ortho-projector matrix then with overwhelming probability $AC_p^n$ is centrally $k$-neighborly (as mentioned earlier, under overwhelming probability we in this paper assume
a probability that is no more than a number exponentially decaying in $n$ away from $1$). Miraculously, \cite{DonohoPol,DonohoUnsigned} provided a precise characterization of $m$ and $k$ (in a large dimensional context) for which this happens.

Before, presenting the details of Donoho's findings we should make a few clarifications. Namely, it should be noted that one usually considers success of
(\ref{eq:l1}) in recovering \emph{any} given $k$-sparse $\tilde{\x}$ in (\ref{eq:system}). It is also of interest to consider success of
(\ref{eq:l1}) in recovering
\emph{almost any} given $\x$ in (\ref{eq:system}). We below make a distinction between these
cases and recall on some of the definitions from
\cite{DonohoPol,DT,DTciss,DTjams2010,StojnicCSetam09,StojnicICASSP09}.

Clearly, for any given constant $\alpha\leq 1$ there is a maximum
allowable value of $\beta$ such that for \emph{any} given $k$-sparse $\tilde{\x}$ in (\ref{eq:system}) the solution of (\ref{eq:l1})
is with overwhelming probability exactly that given $k$-sparse $\tilde{\x}$. We will refer to this maximum allowable value of
$\beta$ as the \emph{strong threshold} (see
\cite{DonohoPol}). Similarly, for any given constant
$\alpha\leq 1$ and \emph{any} given $\x$ with a given fixed location of non-zero components and a given fixed combination of its elements signs
there will be a maximum allowable value of $\beta$ such that
(\ref{eq:l1}) finds that given $\x$ in (\ref{eq:system}) with overwhelming
probability. We will refer to this maximum allowable value of
$\beta$ as the \emph{weak threshold} and will denote it by $\beta_{w}$ (see, e.g. \cite{StojnicICASSP09,StojnicCSetam09}). What we present below are essentially Donoho's findings that relate to $\beta_w$.

Knowing all of this, we can then state what was established in \cite{DonohoPol}. If $A$ is a random ortho-projector $AC_p^n$ will be centrally $k$-neighborly with overwhelming probability if
\begin{equation}
n^{-1}\log(\com\inter(T^{k},T^{m})\exter(F^{m},C_p^n))<0 \label{eq:conddonpol}
\end{equation}
where $\com=2^{m-k}\binom{n-k-1}{m-k}$, $\inter(T^{k},T^{m})$ is the internal angle at face $T^{k}$ of $T^{m}$, $\exter(F^{m},C_p^n)$ is the external angle of $C_p^n$ at any $m$-dimensional face $F^m$, and $T^{k}$ and $T^{m}$ are the standard $k$ and $m$ dimensional simplices, respectively (more on the definitions and meaning of the internal and external angles can be found in e.g. \cite{Grunbaum03}). Donoho then proceeded by establishing that (\ref{eq:conddonpol}) is equivalent to the following inequality related to the sum/difference of the exponents of $\com,\inter$, and $\exter$:
\begin{equation}
\psinet=\psicom-\psiint-\psiext<0 \label{eq:conddonpolexp}
\end{equation}
where
\begin{eqnarray}
\psicom(\beta,\alpha) & = & n^{-1}\log(\com)=(\alpha-\beta) \log(2)+(1-\beta)H(\frac{\alpha-\beta}{1-\beta})\nonumber \\
\psiint(\beta,\alpha) & = & n^{-1}\log(\inter(T^{k},T^{m})) \nonumber\\
\psiext(\beta,\alpha) & = & n^{-1}\log(\exter(F^{m},C_p^n)) \label{eq:conddonpolexp1}
\end{eqnarray}
and $H(p)=-p\log(p)-(1-p)\log(1-p)$ is the standard entropy function and $\log\binom{n}{pn}=e^{nH(p)}$ is the standard approximation of the binomial factor by the entropy function in the limit of $n\rightarrow\infty$. Moreover, Donoho also provided a way to characterize $\psiint(\beta,\alpha),\psiext(\beta,\alpha))$. Let $\gamma=\frac{\beta}{\alpha}$ and for $s\geq 0$
\begin{eqnarray}
\Phi(s) & = & \frac{1}{\sqrt{2\pi}}\int_{s}^{\infty}e^{-\frac{x^2}{2}}dx\nonumber \\
\phi(s) & = & \frac{1}{\sqrt{2\pi}}e^{-\frac{s^2}{2}}.\label{eq:phis}
\end{eqnarray}
Then one has
\begin{equation}
\psiint(\beta,\alpha)=(\alpha-\beta)\xi_{\gamma}(y_{\gamma})+(\alpha-\beta)\log(2)\label{eq:intang1}
\end{equation}
where
\begin{eqnarray}
y_{\gamma} & = & \frac{\gamma}{1-\gamma}s_{\gamma}\nonumber \\
\xi_{\gamma}(y_{\gamma}) & =  & -\frac{1}{2}y_{\gamma}^2\frac{1-\gamma}{\gamma}-\frac{1}{2}\log(\frac{2}{\pi})+\log(\frac{y_{\gamma}}{\gamma}) \label{eq:intang2}
\end{eqnarray}
and $s_{\gamma}\geq 0$ is the solution of
\begin{equation}
\Phi(s)=(1-\gamma)\frac{\phi(s)}{s}.\label{eq:intang3}
\end{equation}
On the other hand the expression for $\psiext(\beta,\alpha)$ is a bit simpler
\begin{equation}
\psiext(\beta,\alpha)=\min_{y\geq 0} (\alpha y^2 -(1-\alpha)\log(\erf(y))).\label{eq:extang1}
\end{equation}
Using (\ref{eq:conddonpolexp1}), (\ref{eq:phis}), (\ref{eq:intang1}), (\ref{eq:intang2}), (\ref{eq:intang3}), (\ref{eq:extang1}) one then for a fixed $\alpha$ finds the largest $\beta$ so that the left-hand side of (\ref{eq:conddonpolexp}) is basically zero. Such a $\beta$ is what we termed above as $\beta_w$. While the above characterization of optimal $\beta_w$ (as a function of $\alpha$) is not super simple it is truly fascinating that it actually ends up being exact.

We summarize the above results in the following theorem.

\begin{theorem}(Exact $\ell_1$ $(\beta_w,\alpha_w)$ threshold --- geometric approach of \cite{DonohoUnsigned,DonohoPol})
Let $A$ in (\ref{eq:system}) be an $m\times n$ ortho-projector (or an $m\times n$ matrix
with the null-space uniformly distributed in the Grassmanian).
Let $k,m,n$ be large
and let $\alpha_w=\frac{m}{n}$ and $\beta=\frac{k}{n}$ be constants
independent of $m$ and $n$. Let $\psicom(\beta,\alpha_w),\psiint(\beta,\alpha_w),\psiext(\beta,\alpha_w)$ be evaluated for a pair $(\beta,\alpha_w)$ through the expressions given in (\ref{eq:conddonpolexp1}), (\ref{eq:intang1}), and (\ref{eq:extang1}). Let then $\beta_w$ be the maximal $\beta$ for which (\ref{eq:conddonpolexp}) holds. Then:

1) With overwhelming
probability polytope $AC_p^n$ will be centrally $\beta n$-neighborly for any $\beta<\beta_w$.

2) With overwhelming
probability polytope $AC_p^n$ will not be centrally $\beta n$-neighborly for any $\beta>\beta_w$.
Moreover, let
$\tilde{\x}$ in (\ref{eq:system}) be $k$-sparse. Then:

1) If $\beta<\beta_w$ then with overwhelming
probability for almost any $\tilde{\x}$, the solution of (\ref{eq:l1}) is exactly that $\tilde{\x}$.

2) If $\beta>\beta_w$ then with overwhelming
probability for almost any $\tilde{\x}$, the solution of (\ref{eq:l1}) is not that $\tilde{\x}$.
\label{thm:thmweakthrdon}
\end{theorem}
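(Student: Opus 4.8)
The plan is to reduce the recovery guarantee for (\ref{eq:l1}) to a purely geometric statement about which faces of the projected cross-polytope $AC_p^n$ survive, and then to settle that statement by a moment computation whose asymptotics are governed exactly by the exponent $\psinet$. For the reduction I would invoke the equivalence already recorded above: the solution of (\ref{eq:l1}) coincides with a given $k$-sparse $\tilde{\x}$ precisely when the $(k-1)$-face of $C_p^n$ determined by the support and signs of $\tilde{\x}$ maps under $A$ to a genuine face of $AC_p^n$. Thus central $\beta n$-neighborliness is the event that \emph{every} such face survives, while recovery of almost every $\tilde{\x}$ is the event that all but an exponentially small fraction survive. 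In both cases the quantity to control is $N$, the number of faces of the relevant family that fail to survive under projection, and the whole problem collapses to estimating $N$.

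The second ingredient is the computation of $\Exp N$ via the integral-geometric machinery of McMullen and Affentranger--Schneider built on \cite{PMM,AS,BorockyHenk,Ruben,VS}: a given face survives a uniformly random projection with a probability expressible through Grassmann angles, and the angle-sum relations factor this probability into an internal angle at a simplicial face and an external angle of $C_p^n$. Summing over faces and multiplying by their combinatorial multiplicity reproduces exactly the product $\com\,\inter\,\exter$ appearing in (\ref{eq:conddonpol}), so that $n^{-1}\log\Exp N \to \psicom-\psiint-\psiext=\psinet$. Extracting the closed forms (\ref{eq:intang1})--(\ref{eq:extang1}) then requires a Laplace/saddle-point evaluation of the two angle integrals: the internal-angle saddle is where the implicit equation (\ref{eq:intang3}) for $s_{\gamma}$ enters, while the external-angle saddle produces the one-dimensional minimization in (\ref{eq:extang1}).

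For the achievability direction ($\beta<\beta_w$) I would finish by a first-moment argument. Since $\psinet(\beta,\alpha_w)<0$ there, $\Exp N$ decays exponentially in $n$, so Markov's inequality forces $\Prob(N\ge 1)$ to be exponentially small; hence with overwhelming probability no face is lost and $AC_p^n$ is centrally $\beta n$-neighborly, and recovery of almost every $\tilde{\x}$ follows a fortiori. At this point one also checks that $\psinet(\beta,\alpha_w)$ is monotone in $\beta$ for fixed $\alpha_w$, so that $\beta_w$ is well defined as the unique sign-change of $\psinet$ and the two regimes $\beta<\beta_w$ and $\beta>\beta_w$ genuinely partition the range.

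The hard part will be the converse ($\beta>\beta_w$), where $\psinet>0$ makes $\Exp N$ exponentially large but a first moment alone cannot certify that $N>0$ with high probability. I would control this by a second-moment estimate: bound $\Exp N^2$ by showing that the joint survival probabilities of two faces factor up to sub-exponential corrections, so that $\Exp N^2\le(1+o(1))(\Exp N)^2$, and then apply the Paley--Zygmund inequality to conclude that $N$ is in fact exponentially large with overwhelming probability, forcing failure of neighborliness and of recovery for a non-negligible fraction of $\tilde{\x}$. Establishing the near-independence of the face-survival events, and in particular handling the correlated contributions of faces sharing vertices, is the most delicate step, and it is precisely this step that pins the threshold to the exact value $\beta_w$ rather than merely bounding it.
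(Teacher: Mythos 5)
Note first that the paper does not actually prove this theorem: its proof is the single line ``Follows from considerations presented in \cite{DonohoUnsigned,DonohoPol}.'' So your proposal has to be judged against the approach of those cited works. For the machinery and the achievability direction your reconstruction is faithful: the equivalence between recovery of a given $\tilde{\x}$ by (\ref{eq:l1}) and survival of the corresponding $(k-1)$-face of $C_p^n$ under projection, the evaluation of the expected number of lost faces through the Grassmann-angle calculus built on \cite{PMM,AS,BorockyHenk,Ruben,VS} (yielding the product $\com\,\inter\,\exter$ and the exponent $\psinet$), the saddle-point analysis behind (\ref{eq:intang1})--(\ref{eq:extang1}), and the Markov/first-moment step for $\beta<\beta_w$ are exactly Donoho's route. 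One caveat even here: the neighborliness claim requires the union bound over \emph{all} $2^k\binom{n}{k}$ faces, which costs an extra combinatorial factor beyond the per-face exponent $\psinet$ (this is the strong-versus-weak threshold distinction that the survey's statement itself glosses over), so ``recovery of almost every $\tilde{\x}$ follows a fortiori'' and ``neighborliness holds'' are not interchangeable at the same $\beta$.

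The genuine gap is your converse. The cited works do not use a second moment at all, and for good reason. The angle-sum expression is not merely a formula for $\Exp N$: for a single fixed face it is an \emph{exact identity} for that face's survival probability (a complementary Grassmann angle), and the sharp asymptotics show that above the threshold this individual survival probability itself decays exponentially. Non-neighborliness then follows by inclusion of events --- ``all faces survive'' is contained in ``this one face survives'' --- and failure of recovery for almost every $\tilde{\x}$ follows by applying Markov's inequality to the number of \emph{surviving} faces; that is, the converse is again a first-moment argument, applied to the complementary count. Your proposed route through $\Exp N^2$ and Paley--Zygmund has two concrete defects. First, even granting $\Exp N^2\le(1+o(1))(\Exp N)^2$, Chebyshev/Paley--Zygmund only give $\Prob(N>0)\ge 1-o(1)$, which is not the ``overwhelming probability'' (exponentially close to one) asserted in the theorem; second-moment bounds cannot certify exponential concentration in principle. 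Second, the near-factorization itself is unsubstantiated: faces sharing most of their vertices have strongly correlated survival events, and precisely those pairs are the ones that typically dominate and ruin second-moment computations of this kind; nothing in your sketch controls them. So the step you flag as ``the most delicate'' is in fact both insufficient for the stated conclusion and unnecessary --- the exact single-face angle identity already pins the threshold from both sides.
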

\begin{proof}
Follows from considerations presented in \cite{DonohoUnsigned,DonohoPol}.
\end{proof}

In the following subsection we present a related collection of results that were obtained in a series of our own work \cite{StojnicCSetam09,StojnicUpper10,StojnicICASSP09} attacking performance analysis of (\ref{eq:l1}) through a probabilistic approach.

%%%%%%%%%%%%%%%%%%%%%%%%%%%%%%%%%%%%%%%%%%%%%%%%%%%%%%%%%%%%%%%%%
\subsection{Purely probabilistic approach to $\ell_1$-optimization}
\label{sec:probl1}
%%%%%%%%%%%%%%%%%%%%%%%%%%%%%%%%%%%%%%%%%%%%%%%%%%%%%%%%%%%%%%%%%

In our own work \cite{StojnicCSetam09} we introduced a novel probabilistic framework for performance characterization of (\ref{eq:l1}) (the framework seems rather powerful; in fact, we found hardly any sparse type of problem that the framework was not able to handle with almost impeccable precision). Using that framework we obtained lower bounds on $\beta_w$. These lower bounds were in an excellent numerical agreement with the values obtained for $\beta_w$ in \cite{DonohoPol}. We were therefore tempted to believe that our lower bounds from \cite{StojnicCSetam09} are tight. In a follow up paper \cite{StojnicUpper10} we then presented a mechanism that can be used to obtain matching upper-bounds, therefore establishing formally results from \cite{StojnicCSetam09} as an alternative ultimate performance characterization of (\ref{eq:l1}). Alternatively, in \cite{StojnicEquiv10}, we provided a rigorous analytical matching of $\beta_w$ threshold characterizations from \cite{StojnicCSetam09} and those given in \cite{DonohoUnsigned}. The following theorem summarizes the results we obtained in \cite{StojnicICASSP09,StojnicCSetam09,StojnicUpper10,StojnicEquiv10}.

\begin{theorem}(Exact $\ell_1$ $(\beta_w,\alpha_w)$ threshold -- probabilistic approach of \cite{StojnicCSetam09,StojnicUpper10})
Let $A$ be an $m\times n$ matrix in (\ref{eq:system})
with i.i.d. standard normal components. Let
$\tilde{\x}$ in (\ref{eq:system}) be $k$-sparse. Further, let the location and signs of nonzero elements of $\tilde{\x}$ be arbitrarily chosen but fixed.
Let $k,m,n$ be large
and let $\alpha=\frac{m}{n}$ and $\beta_w=\frac{k}{n}$ be constants
independent of $m$ and $n$. Let $\erfinv$ be the inverse of the standard error function associated with zero-mean unit variance Gaussian random variable.  Further,
let all $\epsilon$'s below be arbitrarily small constants.
\begin{enumerate}
\item Let $\htheta_w$, ($\beta_w\leq \htheta_w\leq 1$) be the solution of
\begin{equation}
(1-\epsilon_{1}^{(c)})(1-\beta_w)\frac{\sqrt{\frac{2}{\pi}}e^{-(\erfinv(\frac{1-\theta_w}{1-\beta_w}))^2}}{\theta_w}-\sqrt{2}\erfinv ((1+\epsilon_{1}^{(c)})\frac{1-\theta_w}{1-\beta_w})=0.\label{eq:thmweaktheta}
\end{equation}
If $\alpha$ and $\beta_w$ further satisfy
\begin{equation}
\alpha>\frac{1-\beta_w}{\sqrt{2\pi}}\left (\sqrt{2\pi}+2\frac{\sqrt{2(\erfinv(\frac{1-\htheta_w}{1-\beta_w}))^2}}{e^{(\erfinv(\frac{1-\htheta_w}{1-\beta_w}))^2}}-\sqrt{2\pi}
\frac{1-\htheta_w}{1-\beta_w}\right )+\beta_w
-\frac{\left ((1-\beta_w)\sqrt{\frac{2}{\pi}}e^{-(\erfinv(\frac{1-\hat{\theta}_w}{1-\beta_w}))^2}\right )^2}{\hat{\theta}_w}\label{eq:thmweakalpha}
\end{equation}
then with overwhelming probability the solution of (\ref{eq:l1}) is the $k$-sparse $\tilde{\x}$ from (\ref{eq:system}).
\item Let $\htheta_w$, ($\beta_w\leq \htheta_w\leq 1$) be the solution of
\begin{equation}
(1+\epsilon_{2}^{(c)})(1-\beta_w)\frac{\sqrt{\frac{2}{\pi}}e^{-(\erfinv(\frac{1-\theta_w}{1-\beta_w}))^2}}{\theta_w}-\sqrt{2}\erfinv ((1-\epsilon_{2}^{(c)})\frac{1-\theta_w}{1-\beta_w})=0.\label{eq:thmweaktheta1}
\end{equation}
If on the other hand $\alpha$ and $\beta_w$ satisfy
\begin{multline}
\hspace{-.5in}\alpha<\frac{1}{(1+\epsilon_{1}^{(m)})^2}\left ((1-\epsilon_{1}^{(g)})(\htheta_w+\frac{2(1-\beta_w)}{\sqrt{2\pi}} \frac{\sqrt{2(\erfinv(\frac{1-\htheta_w}{1-\beta_w}))^2}}{e^{(\erfinv(\frac{1-\htheta_w}{1-\beta_w}))^2}})
-\frac{\left ((1-\beta_w)\sqrt{\frac{2}{\pi}}e^{-(\erfinv(\frac{1-\hat{\theta}_w}{1-\beta_w}))^2}\right )^2}{\hat{\theta}_w(1+\epsilon_{3}^{(g)})^{-2}}\right )\label{eq:thmweakalpha}
\end{multline}
then with overwhelming probability there will be a $k$-sparse $\tilde{\x}$ (from a set of $\tilde{\x}$'s with fixed locations and signs of nonzero components) that satisfies (\ref{eq:system}) and is \textbf{not} the solution of (\ref{eq:l1}).
\end{enumerate}
\label{thm:thmweakthrstoj}
\end{theorem}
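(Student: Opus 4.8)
The plan is to convert the statement about $\ell_1$-recovery into a purely geometric question about whether the null-space of $A$ avoids a fixed cone, and then to control the relevant Gaussian process with Gordon's comparison inequality. First I would record the standard null-space characterization: if $\tilde{\x}$ is supported on a set $K$ with $|K|=k$ and prescribed signs $\sigma_i=\mbox{sgn}(\tilde x_i)$, then $\tilde{\x}$ is the unique solution of (\ref{eq:l1}) if and only if every nonzero $\w$ with $A\w=\0$ satisfies $\sum_{i\in K}\sigma_i w_i>-\sum_{i\notin K}|w_i|$. Because $A$ has i.i.d.\ Gaussian entries its null-space is uniformly distributed in the Grassmanian, so its law is invariant under coordinate permutations and sign flips; this lets me fix, without loss of generality, $K=\{1,\dots,k\}$ and all signs equal, reducing the recovery condition to the statement that the null-space meets the cone
\[
\Sw=\Big\{\w\in R^{n}:\ -\sum_{i=1}^{k}w_i\ \geq\ \sum_{i=k+1}^{n}|w_i|\Big\}
\]
only at the origin. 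Equivalently, recovery succeeds iff $\min_{\w\in\Sw,\ \|\w\|_2=1}\|A\w\|_2>0$, while a non-recovered $\tilde{\x}$ exists iff this minimum equals $0$.

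For part 1 (the sufficient condition) I would lower-bound this minimum using the high-probability side of Gordon's min-max theorem. Writing $\|A\w\|_2=\max_{\|\y\|_2=1}\y^{T}A\w$ and comparing the Gaussian process $\y^{T}A\w$ with the decoupled process $\|\w\|_2\,\g^{T}\y+\|\y\|_2\,\h^{T}\w$, where $\g\in R^{m}$ and $\h\in R^{n}$ are independent standard Gaussian vectors, Gordon's inequality gives, after optimizing over $\y$ on the sphere,
\[
\Exp\min_{\w\in\Sw,\ \|\w\|_2=1}\|A\w\|_2\ \geq\ \Exp\|\g\|_2-\omega(\Sw),\qquad \omega(\Sw):=\Exp\max_{\w\in\Sw,\ \|\w\|_2=1}\h^{T}\w.
\]
Since $\Exp\|\g\|_2\approx\sqrt{m}=\sqrt{\alpha n}$, the target event holds with overwhelming probability as soon as $\sqrt{\alpha n}$ strictly exceeds the Gaussian width $\omega(\Sw)$, the gap being promoted from expectation to overwhelming probability by Gaussian concentration of the $1$-Lipschitz maps $A\mapsto\min\|A\w\|_2$ and $\h\mapsto\max\h^{T}\w$.

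The technical heart is the evaluation of $\omega(\Sw)$, which is exactly what produces the $\erfinv$ expressions and the auxiliary variable $\htheta_w$. I would attach a Lagrange multiplier to the cone constraint, observe that the maximization then decouples across the $k$ ``support'' coordinates and the $n-k$ ``off-support'' coordinates, and solve the resulting one-dimensional stationarity conditions. The threshold level at which the off-support coordinates are truncated concentrates around a deterministic value expressed through $\Phi,\phi$ and hence through $\erfinv$; imposing the balance between the support and off-support contributions yields precisely the defining equation (\ref{eq:thmweaktheta}) for $\htheta_w$, and substituting back gives $\omega(\Sw)^2/n$ equal to the right-hand side of (\ref{eq:thmweakalpha}). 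The constants $\epsilon_1^{(c)}$ and the like are the slacks absorbed when the random empirical sums are replaced by their concentrated limits, so that $\sqrt{\alpha n}>\omega(\Sw)$ is verified uniformly. Matching this comparison with the width evaluation establishes part 1.

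For part 2 (the necessary condition) the strategy is dual: when $\alpha$ falls below the stated bound I would show that the same width estimate forces the null-space to meet $\Sw$ nontrivially with overwhelming probability, producing a feasible $\tilde{\x}$ that fails (\ref{eq:l1}). This can be carried out through the reverse direction of the Gordon/Slepian comparison, which upper-bounds $\min\|A\w\|_2$ by $\sqrt{m}-\omega(\Sw)$, combined with the sharp high-dimensional concentration of this minimum, or alternatively by a conic second-moment argument showing that a uniformly random $(n-m)$-dimensional subspace intersects a cone of statistical dimension $\approx\omega(\Sw)^2$ precisely once $n-m$ exceeds $n-\omega(\Sw)^2$. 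The main obstacle is exactly this matching upper bound: Gordon's inequality delivers only the ``easy'' success direction cleanly, and proving tightness --- so that the \emph{same} $\erfinv$ threshold governs failure --- requires either a delicate reverse comparison with matching concentration or an independent geometric computation. This is why a slightly different stationarity equation (\ref{eq:thmweaktheta1}) and a separate collection of $\epsilon$'s appear in this half of the theorem, and it is the step I expect to consume most of the effort.
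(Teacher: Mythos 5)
Your proposal is correct and takes essentially the same route as the paper's proof, which is given only by citation: the cited works \cite{StojnicCSetam09,StojnicUpper10} establish part 1 precisely via the null-space/sign-pattern cone reduction, Gordon's comparison (escape-through-a-mesh) bound $\Exp\|\g\|_2-\omega(\Sw)$, and the Lagrangian width computation that yields the $\erfinv$ equations, and part 2 via the matching reverse-comparison upper bound, exactly as you outline. Your identification of the converse (tightness) direction as the delicate step requiring a separate argument with its own $\epsilon$'s is also faithful to how the work splits between \cite{StojnicCSetam09} and \cite{StojnicUpper10}.
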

\begin{proof}
The first part was established in \cite{StojnicCSetam09} and the second one was established in \cite{StojnicUpper10}. An alternative way of establishing the same set of results was also presented in \cite{StojnicEquiv10}.
\end{proof}

We below provide a more informal interpretation of what was established by the above theorem. Assume the setup of the above theorem. Let $\alpha_w$ and $\beta_w$ satisfy the following:

\vspace{.7in}
\noindent \underline{\underline{\textbf{Fundamental characterization of the $\ell_1$ performance:}}}

\begin{center}
\shadowbox{$
%\begin{equation}
(1-\beta_w)\frac{\sqrt{\frac{2}{\pi}}e^{-(\erfinv(\frac{1-\alpha_w}{1-\beta_w}))^2}}{\alpha_w}-\sqrt{2}\erfinv (\frac{1-\alpha_w}{1-\beta_w})=0.
%\end{equation}
$}
-\vspace{-.5in}\begin{equation}
\label{eq:thmweaktheta2}
\end{equation}
\end{center}

Then:

1) If $\alpha>\alpha_w$ then with overwhelming probability the solution of (\ref{eq:l1}) is the $k$-sparse $\tilde{\x}$ from (\ref{eq:system}).

2) If $\alpha<\alpha_w$ then with overwhelming probability there will be a $k$-sparse $\tilde{\x}$ (from a set of $\tilde{\x}$'s with fixed locations and signs of nonzero components) that satisfies (\ref{eq:system}) and is \textbf{not} the solution of (\ref{eq:l1}).

As mentioned above, in \cite{StojnicEquiv10} we established that the characterization given in Theorems \ref{thm:thmweakthrdon} and \ref{thm:thmweakthrstoj} are analytically equivalent which essentially makes
(\ref{eq:thmweaktheta2}) the ultimate performance characterization of $\ell_1$-optimization when it comes to its use in finding the sparse solutions of random under-determined linear systems.

%%%%%%%%%%%%%%%%%%%%%%%%%%%%%%%%%%%%%%%%%%%%%%%%%%%%%%%%%%%%%%%%%
\section{Approximate message passing - AMP}
\label{sec:amp}
%%%%%%%%%%%%%%%%%%%%%%%%%%%%%%%%%%%%%%%%%%%%%%%%%%%%%%%%%%%%%%%%%

In this section we briefly revisit a novel approach for solving (\ref{eq:system}). The approach was introduced in \cite{DonMalMon09}. It is essentially an iterative algorithm:
\begin{eqnarray}
\x^{(t+1)} & = & \eta_t(A^T\z^{(t)}+\x^{(t)})\nonumber \\
\z^{(t)} & = & \y-A\x^{(t)}+\frac{1}{\alpha}\z^{(t-1)}\mbox{Avg}(\eta_{t}^{'}(A^T\z^{(t-1)}+\x^{(t-1)})).\label{eq:amp}
\end{eqnarray}
$\eta_t$ is a scalar function which operates component-wise on vectors and $\eta_t'$ is the first derivative of $\eta_t$ with respect to its scalar argument. $\mbox{Avg}$ is a function that computes the average value of the components of its vector argument. The algorithm is iterative and a stopping criterion should be specified as well. There are many ways how this can be done; for example one can stop the algorithm when a norm of the difference between two successive $\x$'s is what one deems small when compared to their own norms. The more important question is why this algorithm would have a good performance. In the absence of term $\frac{1}{\alpha}\z^{(t-1)}\mbox{Avg}(\eta_{t}^{'}(A^T\z^{(t-1)}+\x^{(t-1)}))$ the algorithm boils down to the class of iterative thresholding algorithms considered in e.g. \cite{MalDon10}. These algorithms have a solid recovery abilities and are very fast. The algorithm (\ref{eq:amp}) is obviously also very easy to implement and has a substantially lower running complexity than BP. Using a state evolution formalism in \cite{DonMalMon09} a fairly precise performance characterization of (\ref{eq:amp}) when used for finding $\tilde{\x}$ in (\ref{eq:system}) was given. Namely, in \cite{DonMalMon09} the authors established that
\begin{equation}
\beta_w^{(amp)}=\alpha_w^{(amp)}\max_{z\geq 0}\left ( \frac{1-2/\alpha_w^{(amp)}((1+z^2)\Phi(z)-z\phi(z))}{1+z^2-2((1+z^2)\Phi(z)-z\phi(z))}\right ),\label{eq:betaseamp}
\end{equation}
with $\beta_w^{(amp)}$ and $\alpha_w^{(amp)}$ having meanings similar to those of $\beta_w$ and $\alpha_w)$ from the previous section. Moreover, in \cite{BayMon10} the state evolution formalism was proved to hold thereby establishing findings of \cite{DonMalMon09} as rigorous.

We summarize the above results in the following theorem.
\begin{theorem}(Exact AMP $(\beta_w^{(amp)},\alpha_w^{(amp)})$ threshold  --- AMP approach of \cite{DonMalMon09,BayMon10})
Let $A$ be an $m\times n$ matrix in (\ref{eq:system})
with i.i.d. standard normal components. Let
$\tilde{\x}$ in (\ref{eq:system}) be $k$-sparse and given. Let $k,m,n$ be large
and let $\alpha_w^{(amp)}=\frac{m}{n}$ and $\beta=\frac{k}{n}$ be constants
independent of $m$ and $n$. Let $\Phi(z)$ and $\phi(z)$ be as defined in (\ref{eq:phis}). Let $\beta_w^{(amp)}$ be as defined in (\ref{eq:betaseamp}). Then there is a suitable function $\eta_t$ in (\ref{eq:amp}) (e.g. a properly tuned simple soft thresholding function would suffice) such that:

1) If $\beta<\beta_w^{(amp)}$ the solution of (\ref{eq:amp}) is the $k$-sparse $\tilde{\x}$ in (\ref{eq:system}) with overwhelming probability.

2) If $\beta>\beta_w^{(amp)}$ the solution of (\ref{eq:amp}) is not the $k$-sparse $\tilde{\x}$ in (\ref{eq:system}) with overwhelming probability.
\label{thm:thmweakthrDMM}
\end{theorem}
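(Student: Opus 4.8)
The plan is to establish the theorem through the \emph{state evolution} (SE) formalism, which collapses the high-dimensional random recursion (\ref{eq:amp}) onto a one-dimensional deterministic recursion tracking a single scalar effective noise level $\tau_t$. The conceptual engine is the Onsager correction term $\frac{1}{\alpha}\z^{(t-1)}\mbox{Avg}(\eta_{t}^{'}(A^T\z^{(t-1)}+\x^{(t-1)}))$: its purpose is to cancel, to leading order, the correlations that would otherwise build up between the fixed matrix $A$ and the iterates across successive steps. Under its influence the effective observation $A^T\z^{(t)}+\x^{(t)}$ that is fed into the thresholding step should behave, in the large-$n$ limit, like the true signal corrupted by \emph{independent} Gaussian noise, i.e. componentwise like $\tilde{\x}+\tau_t\g$ with $\g$ having approximately i.i.d. $N(0,1)$ entries. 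First I would make this precise by deriving the scalar SE recursion
\begin{equation}
\tau_{t+1}^2=\frac{1}{\alpha_w^{(amp)}}\Exp\left [\left (\eta_t(X+\tau_t Z)-X\right )^2\right ],
\end{equation}
where $Z\sim N(0,1)$ and $X$ is drawn from the empirical law of the entries of $\tilde{\x}$ (a fraction $\beta$ of nonzeros, the rest zero, with no additive term since the observations $\y=A\tilde{\x}$ are noiseless), and then identify the fixed points of this map.

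The crux of the argument --- and the step I expect to be the main obstacle --- is rigorously justifying that this SE recursion genuinely tracks the algorithm, in the sense that for each fixed $t$ and any pseudo-Lipschitz test function $\psi$ the empirical average $\frac{1}{n}\sum_i\psi(x_i^{(t+1)},\tilde{x}_i)$ converges with overwhelming probability to $\Exp[\psi(\eta_t(X+\tau_t Z),X)]$. I would follow the conditioning technique of the cited \cite{BayMon10}: argue by induction on $t$, at each step conditioning on the $\sigma$-algebra generated by all iterates produced so far and exploiting the rotational invariance of the Gaussian $A$ to write its conditional law as a deterministic part (the already-revealed action of $A$ on past vectors) plus a fresh independent Gaussian component. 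The Onsager term is precisely engineered so that the contribution of the deterministic part to the new effective noise is annihilated, leaving a clean Gaussian of variance $\tau_t^2$. Concentration of the resulting quadratic forms and inner products then delivers the stated overwhelming-probability convergence. Making this induction airtight --- tracking the error terms, handling the nonlinearity of $\eta_t$, and controlling the conditional covariance structure step by step --- is the technically demanding heart of the proof.

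With SE validated, the recovery question reduces to the behavior of the scalar map. I would specialize $\eta_t$ to soft thresholding with threshold proportional to the current noise level, $\eta_t(\cdot)=\eta(\cdot\,;z\tau_t)$, and ask when $\tau=0$ is a stable, globally attracting fixed point of $\tau^2\mapsto\frac{1}{\alpha_w^{(amp)}}\Exp[(\eta(X+\tau Z;z\tau)-X)^2]$. Exact recovery corresponds to $\tau_t\to 0$, hence $\x^{(t)}\to\tilde{\x}$, while failure corresponds to a strictly positive stable fixed point. The boundary is dictated by the map's behavior near the origin, which for soft thresholding is governed by the Gaussian null risk $\Exp[\eta(Z;z)^2]=2[(1+z^2)\Phi(z)-z\phi(z)]$ --- exactly the quantity appearing in (\ref{eq:betaseamp}). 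Carrying out the minimax optimization over the threshold parameter $z$ (minimizing the worst-case per-step risk amplification over placements of the nonzeros) yields the critical sparsity (\ref{eq:betaseamp}) and hence the claimed $\alpha_w^{(amp)}$--$\beta_w^{(amp)}$ curve.

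The two conclusions then follow directly. For $\beta<\beta_w^{(amp)}$ the optimally tuned recursion contracts to $\tau=0$, so $\x^{(t)}$ converges to $\tilde{\x}$ with overwhelming probability, giving part 1). For $\beta>\beta_w^{(amp)}$ no admissible threshold policy can drive the stable fixed point to zero, so with overwhelming probability a strictly positive asymptotic error persists and recovery fails, giving part 2). The converse direction is thus a consequence of the same scalar analysis rather than a separate argument, which is one of the attractive features of the SE approach.
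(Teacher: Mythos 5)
Your proposal is correct and takes essentially the same route as the paper: the paper's entire proof is a deferral to \cite{DonMalMon09} (the algorithm, the state evolution heuristic, and the threshold formula (\ref{eq:betaseamp})) and \cite{BayMon10} (the rigorous validation via the Gaussian conditioning/induction technique), and your sketch is precisely an unfolding of the content of those two references --- the Onsager-corrected recursion, the conditioning argument, and the scalar fixed-point/minimax soft-thresholding risk analysis. The only caveat worth flagging is that the conditioning technique as cited yields asymptotic (almost sure) tracking of state evolution for fixed $t$, so upgrading this to the ``overwhelming probability'' (exponentially small failure probability) guarantee stated in the theorem requires concentration rates that neither your sketch nor the paper's citation-level proof makes explicit.
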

\begin{proof}
The algorithm as well as the general finding were established in \cite{DonMalMon09}. The mathematical correctness was established in \cite{BayMon10}.
\end{proof}
Moreover, in \cite{DonMalMon09} it was established that the characterization given in (\ref{eq:betaseamp}) actually analytically matches the characterization given in (\ref{eq:thmweaktheta2}) (and based on findings of \cite{StojnicEquiv10} automatically the one obtained by Donoho and given in Theorem \ref{thm:thmweakthrdon}). All in all, based on everything we mentioned above one is essentially left with a signle characterization that determines performance of both, the $\ell_1$-optimization algorithm from (\ref{eq:l1}) and the AMP algorithm from (\ref{eq:amp}). Below, in Figure \ref{fig:weak} we present the characterization in $(\beta,\alpha)$ plane.
\begin{figure}[htb]
%%%%%\begin{minipage}[b]{1.0\linewidth}
\centering
\centerline{\epsfig{figure=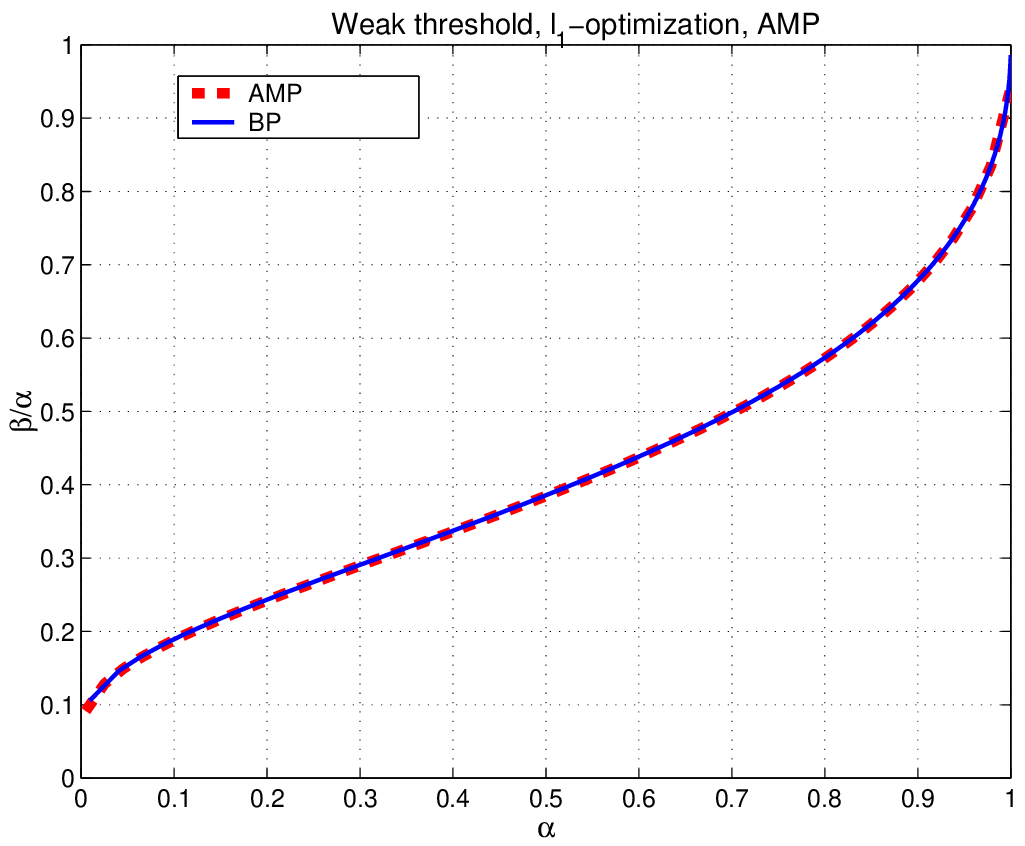,width=10.5cm,height=9cm}}
%%%%%%\end{minipage}
\caption{\emph{Weak} threshold, $\ell_1$-optimization (BP), AMP}
\label{fig:weak}
\end{figure}

%%%%%%%%%%%%%%%%%%%%%%%%%%%%%%%%%%%%%%%%%%%%%%%%%%%%%%%%%%%
\section{Revisiting challenges} \label{sec:challenge}
%%%%%%%%%%%%%%%%%%%%%%%%%%%%%%%%%%%%%%%%%%%%%%%%%%%%%%%%%%%

In the previous sections we revisited algorithmic and theoretical results that we view as the most successful currently available when it comes to recovering $\tilde{\x}$ in (\ref{eq:system}). However, if one now looks at the timeline of all these results one can observe that since the original work of Donoho \cite{DonohoUnsigned,DonohoPol} appeared almost $10$ years ago not much changed in the performance characterizations. Of course not much can be changed, Donoho actually determined the performance characterization of the $\ell_1$-optimization. What we really mean when we say not much has changed is that there has not been alternative characterizations that go above the one presented in Figure \ref{fig:weak}. One can object even this statement. Namely, there are various special cases when the characterizations can be lifted (see, e.g. \cite{StojnicICASSP10knownsupp} or say various other papers that deal with reweighted $\ell_1$ type of algorithms \cite{CWBreweighted,SChretien08,SaZh08}). Still, while there are scenarios where the characterizations can be lifted, we have not seen yet what we would consider a ``universal" lift of the characterization given in Figure \ref{fig:weak}. When we say universal, we actually mean that the characterization should faithfully portray an algorithm's performance over a fairly uniform choice of $\tilde{\x}$ (or even over all $\tilde{\x}$). For example, while all reweighted versions of $\ell_1$ typically provide substantial improvement over $\ell_1$ they typically fail to do so when $\tilde{\x}$ has binary nonzero components. This of course raises a question as to what one can/should consider as a universal improvement over $\ell_1$ and a fairly uniform choice of $\tilde{\x}$. In our view, to quantify uniformity of $\tilde{\x}$ for which we expect algorithms to work the characterization formulation given in Theorem \ref{thm:thmweakthrstoj} could be somewhat useful. Namely, borrowing parts of a setup of such a formulation one can pose the following problem:

\textbf{Question 1}:
Let $A$ be an $\alpha n\times n$ matrix with i.i.d standard normal components. Let $\tilde{\x}$ be a $\beta n$ -sparse $n$-dimensional vector from $R^n$ and let the signs and locations of its non-zero components be arbitrarily chosen but fixed. Moreover, let pair $(\beta,\alpha)$ reside in the area above the curve given in Figure \ref{fig:weak}. Can one then design a polynomial algorithm that would with overwhelming probability (taken over randomness of $A$) solve (\ref{eq:system}) for all such $\tilde{\x}$?

The idea is then that if the answer to the above question is yes then we ``agree" that an improvement over $\ell_1$ has been made. Of course, it is not really clear if the above question is really the best possible to assess a potential improvement over $\ell_1$. Essentially, in our view, it is an individual assessment what establishes an improvement and what does not. For us for example, it is actually even hard to explain what we would consider an improvement. Since this is a mathematical paper, the question posed above is an attempt to mathematically characterize it. However, practically speaking, it is rather something that can not be described precisely but would be obvious to recognize if presented upon. From that point of view, the above question is just a reflection of our success/failure in finding a way to fit our feeling into an exact mathematical description. We do believe that over time one can develop a better formulation but until then we will rely on the one given above and on a bit of a subjective individual feeling. Along the same lines then, everything that we will write below should in a way be prefaced by such a statement.

%%%%%%%%%%%%%%%%%%%%%%%%%%%%%%%%%%%%%%%%%%%%%%%%%%%%%%%%%%%
\subsection{Restrictions} \label{sec:restrictions}
%%%%%%%%%%%%%%%%%%%%%%%%%%%%%%%%%%%%%%%%%%%%%%%%%%%%%%%%%%%

There are several comments that we believe are in place. They, in first place, refer to the restrictions we posed in the above question.

\begin{enumerate}

\item In the posed question we insist that the components of $A$ are i.i.d standard normal random variables. That may not necessarily be the right way to capture the universal capabilities of $\ell_1$ or for that matter the universal capabilities of any other algorithm. Still, it is our belief that such a statistical choice is the least harmful. In other words, if we assume that $A$ has a different type of randomness one then may ask why such a randomness is any more universal than say Gaussian. While we indeed restricted randomness of $A$ we believe that we did it in a fairly harmless way.

\item Another restriction that we introduced is the restriction on $\tilde{\x}$. This restriction may be a bit problematic if, for example, one works hard to select a particulary good/bad set of non-zero locations for a particularly good/bad matrix $A$. However, if $A$ is comprised of i.i.d. standard normals then this choice seems harmless as well. Of course, if a different $A$ is to be considered then restricting signs and locations can substantially bias $\tilde{\x}$. Also, although it is not necessary, we suggest one uniformly randomly select locations and signs and then fix them (given the rotational invariance of rows of $A$ this may sound as if unnecessary i.e. one can alternatively take any set of non-zero locations and any combination of signs). However, one eventually may want to upgrade Question 1 to include different matrices $A$ and then random choice of locations and signs of $\tilde{\x}$ may be needed.

\item Our choice of polynomial algorithms can also be problematic. For example, there are many algorithms that are provably polynomial but with running time that can hardly ever be executed practically. More importantly, by insisting that the algorithms are polynomial we are potentially excluding some of the random algorithms or those whose running time depends on the values of the input (which in our case are random!). This is probably one of the major issues with Question 1. It is possible that not much would change even if we allow, say, algorithms that are with overwhelming probability (taken over their own randomness or even over the randomness of the problem itself or even over both of them) polynomial. For example, if AMP was able to give a performance characterization higher than the one that BP gives, the answer to Question 1 would still not be yes. One would have to argue that AMP is a polynomial algorithm. That is exactly where the problems of polynomiality may appear. One could occasionally have problems arguing that typically super-fast random algorithms are in the worst-case polynomial. Moreover, one should as well be careful how the worst-case is interpreted, i.e. is it interpreted over problem instances or over its algorithm's own randomness. We do, however, believe that if the polynomiality is a stiff restriction one can relax it to polynomial with overwhelming probability, where, as mentioned above, randomness would be over both, the problem instances as well as potential random structure of the algorithm.

\end{enumerate}

%%%%%%%%%%%%%%%%%%%%%%%%%%%%%%%%%%%%%%%%%%%%%%%%%%%%%%%%%%%
\subsection{Redirecting a challenge} \label{sec:challenge}
%%%%%%%%%%%%%%%%%%%%%%%%%%%%%%%%%%%%%%%%%%%%%%%%%%%%%%%%%%%

If one can come to terms with deficiencies of the question that we posed then it may not be a bad idea to revisit the timeline of the problem it addresses. As is well known, under-determined linear systems with sparse solutions have been around for a long time. Consequently, a host of ways to attack them is known (in fact, we briefly discussed some of them in Section \ref{sec:back}). For a long time it had been a prevalent opinion that BP is a solid heuristic when it comes to increasing recoverable sparsity. Such a popular believe was analytically justified for the first time in seminal works \cite{CRT,DonohoPol,DonohoUnsigned}. Moreover, the results of \cite{DonohoPol,DonohoUnsigned} in a large dimensional and statistical context provided the exact performance characterization of BP. Initial success of \cite{CRT,DonohoPol,DonohoUnsigned} then generated enormous interest in sparse problems in many different fields. The set of achieved results does not seem exhaustable and as if growing on a daily basis. Impressive results have been achieved across a variety of disciplines and range from various algorithmic implementations to specific applications and needed adaptations.

Our own interest is on a purely mathematical level. From a purely mathematical point of view, Question 1 (with its all above mentioned deficiencies)  in our mind stands as a key test on the path of almost any improvement in recoverable sparsity characterizations. Providing answer yes to Question 1 is basically a guarantee that a mathematical improvement is possible. Now, looking back at what was done in last $10$ two lines of work that we mentioned in the previous sections are of particular interest. One is the line that follows the design and analysis of AMP and the other one is our own revisit of BP. However, not much progress seems to have been made as far as providing answer yes to Question 1 in any of these lines (and for that matter in any other line of work known to us). Namely, while both results, \cite{DonMalMon09,BayMon10} and \cite{StojnicCSetam09,StojnicUpper10} are incredible feat on their own, not only are they not moving the characterization obtained by Donoho in \cite{DonohoPol,DonohoUnsigned}, they are actually reestablishing it in a different way. Reestablishing Donoho's results is of course a fine mathematical achievement. However, when viewed through the prism of establishing answer yes to Question 1 reestablishing Donoho's results is a somewhat pessimistic progress.

More specifically, our own results, for example, in a way hint that the best one can do through a convex type of relaxation is probably what BP does. On the other hand situation may be even worse if one looks at AMP and results obtained in \cite{DonMalMon09,BayMon10}. It is almost unbelievable that a different algorithm (in this case the AMP) achieves exactly the same performance as BP. Since it does happen one naturally wonders how is it possible. One simple way would be that AMP essentially just solves BP, though in a very clever and efficient way (if this would turn out to be indeed true then, as far as moving up the curve in Figure \ref{fig:weak} is concerned, things may not be overly pessimistic). On the other hand, if AMP is indeed a fundamentally different approach then one may start thinking weather or not lifting the curve in Figure \ref{fig:weak} is actually possible within the frame of Question 1. And since there is currently really no evidence either way one simply wonders if it is already a time to start looking at Question 1 with the idea of providing answer no.

Since we have not looked at Question 1 from that perspective we can not really comment much as to what are the chances that the answer is indeed no. On the other had it has been almost $10$ years since Donoho created his results, almost $5$ years since we created our own, and probably as long since the results of \cite{DonMalMon09,BayMon10} were created. Given the massive interest that this field has seen in last $10$ years one would expect that if the answer to Question 1 is yes then it would have been already established. Of course, one can then alternatively argue the other way around as well. Namely, if the answer to Question 1 is no, wasn't there enough time during the last decade to establish it. We of course do not know if there was enough time for establishing any definite answer to Question 1. However, it is our belief that the majority of mathematical work was concentrated at establishing results that would imply answer yes to Question 1. If our belief is even remotely close to the truth then one can realistically ask if it is really a time to redirect the challenge and try to look at ways that would lead to providing answer no to Question 1.

As we have stated above, we do not know what the answer to Question 1 is. However, given that we expressed our belief that it is not impossible that the answer is actually no, it would be in fact reasonable that we provide at least some information as to which way we are leaning. Well, our position is somewhat funny but certainly worth sharing: we work believing that the answer is yes but if we were to bet we would bet that the answer is no. Of course this position is massively hedged but in our view seems reasonable. Namely, if it turns out that the answer is yes we would need to pay but would in return get to see the show which seems as a pretty nice option (if there is to be a show we firmly believe that it must be a big one!). On the other hand if there is no show we would get overreimbursed for the ticket we actually never had which is not that bad either. As far as our preference goes though, we would still prefer to see the show!

%%%%%%%%%%%%%%%%%%%%%%%%%%%%%%%%%%%%%%%%%%%%%%%%%%%%%%%%%%%%%%%%%%%%%%%%%%%%%%%%
\section{Further considerations}
\label{sec:furcons}
%%%%%%%%%%%%%%%%%%%%%%%%%%%%%%%%%%%%%%%%%%%%%%%%%%%%%%%%%%%%%%%%%%%%%%%%%%%%%%%%

%%%%%%%%%%%%%%%%%%%%%%%%%%%%%%%%%%%%%%%%%%%%%%%%%%%%%%%%%%%%%%%%%%%%%%%%%%%%%%%%
\subsection{What after Question 1}
\label{sec:aftquest1}
%%%%%%%%%%%%%%%%%%%%%%%%%%%%%%%%%%%%%%%%%%%%%%%%%%%%%%%%%%%%%%%%%%%%%%%%%%%%%%%%

In the previous section we discussed a possible shift in the approach to answering Question 1. A very important point to make is that even if one is able to answer Question 1 the whole story is not over. In this subsection we present what in our view would be further points of interest once Question 1 is settled.

If it turns out that the answer is no, then in a way the value of a majority of the work done in the previous decade would be even higher. As we have mentioned above a majority of the work done in last decade was related to polynomial algorithms (or those that are highly likely to be polynomial) and part of the $(\beta,\alpha)$ plane below the curve given in (\ref{eq:thmweaktheta2}) and Figure \ref{fig:weak}. In that sense the contribution of line of work initiated in \cite{DonMalMon09,BayMon10} would be pretty much invaluable.

On the other hand if it turns out that the answer to Question 1 is yes then naturally a variety of further questions will appear. The first next in our mind would be:

\textbf{Question 2:} Assuming that the answer to Question 1 is yes, can one then determine an alternative curve say $(\beta^{(opt)},\alpha^{(opt)})$ for which the answer to Question 1 is no? Along the same lines can it happen that there is no such a curve that is below a straight line at $1$?

Then one can go further and assuming that the answer to the first part of Question 2 is yes but the answer to the second part of Question 2 is no, ask the following:

\textbf{Question 3:} Assuming that the answer to the first part of Question 2 is yes, can one then lower curve $(\beta^{(opt)},\alpha^{(opt)})$ until the answer to Question 1 is no?

Settling all these questions would in our mind be a way to deepen our understanding of a polynomial solvability of under-determined linear systems with sparse solutions.

%%%%%%%%%%%%%%%%%%%%%%%%%%%%%%%%%%%%%%%%%%%%%%%%%%%%%%%%%%%%%%%%%%%%%%%%%%%%%%%%
\subsection{What after Questions 2 nd 3}
\label{sec:aftquest123}
%%%%%%%%%%%%%%%%%%%%%%%%%%%%%%%%%%%%%%%%%%%%%%%%%%%%%%%%%%%%%%%%%%%%%%%%%%%%%%%%

An important scenario that may play out when settling the above questions is that the ultimate curve $(\beta^{(opt)},\alpha^{(opt)})$ (under the premises of Question 1) is not the straight line at $1$ (for example, answer no to Question 1 immediately forces such a scenario). Such a scenario would be a great opportunity to revive studying random hardness within the current complexity theory framework. In our mind such a view of hardness portion of the traditional complexity theory is an important aspect both, practically and theoretically. Unfortunately, it seems a bit premature to start looking at it right now for a variety reasons. First, even in a general complexity theory there are fewer results that relate to random hardness then to typical notion of worst-case hardness/completeness. Second, we are not even sure that the current setup of random hardness/completeness has been well established/investigated even on way more popular optimization or decision problems.

Still, it is important to note that if one starts attacking Question 1 with an ambition to show that the answer is no, then the above mentioned random hardness concepts should probably be revisited and their meaning reunderstood and quite possibly even adapted to better fit the scope of the story presented here. The idea of this paper is just to hint that there may be a time to think about other directions when it comes to studying linear systems. We then consequently refrain from a further detailed discussion about this here, but mention that all these problems seem to be at a cutting edge of what we envision as a future prospect for studying under-determined systems with sparse solutions.

%%%%%%%%%%%%%%%%%%%%%%%%%%%%%%%%%%%%%%%%%%%%%%%%%%%%%%%%%%%%%%%%%%%%%%%%%%%%%%%%
\section{Conclusion}
\label{sec:conc}
%%%%%%%%%%%%%%%%%%%%%%%%%%%%%%%%%%%%%%%%%%%%%%%%%%%%%%%%%%%%%%%%%%%%%%%%%%%%%%%%

In this paper we revisited under-determined systems of linear equations with sparse solutions. We looked at a particular type of mathematical problems that arise when studying such systems. Namely, we looked at the characterizations of relations between the size of the system and the sparsity of the solutions so that the systems are solvable in polynomial time.

We started by giving a brief overview of the results that we considered as mathematically most important for a direction of study that we wanted to popularize. We then made several observations related to the pace of progress made in last $10$ years. When it comes to studying polynomial algorithms and their abilities to solve a class of random under-determined linear systems, our main observation is that there has been a somewhat limited progress as to what the ultimate performance characterization of such algorithms is. We then raised a question which in a way asks whether is it possible that the performance characterizations of two known algorithms (namely, BP and AMP) could in fact be the optimal ones when it comes to polynomial algorithms. We believe that this will stimulate a further discussion in this direction in a host of mathematical fields.

%\newpage1
%\setcounter{page}{1}
\begin{singlespace}
\bibliographystyle{plain}
\bibliography{ReDirChallRefs}
\end{singlespace}

\end{document}